\newcommand{\beq}{\begin{equation}}
\newcommand{\eeq}{\end{equation}}
\newtheorem{theorem}{Theorem}
\newcommand{\barr}{\begin{array}}
\newcommand{\earr}{\end{array}}
\begin{document}
\title{General solution of a second order non-homogenous linear
difference equation with noncommutative coefficients }
\author{\normalsize M A
Jivulescu$^{1,2}$, A Napoli$^1$, A Messina$^1$\\
\small ${}^1$ MIUR, CNISM and Dipartimento di Scienze Fisiche ed
Astronomiche, \\\small Universit\`{a} di Palermo, via Archirafi
36, 90123 Palermo, Italy\\\small ${}^2$  Department of
Mathematics, " Politehnica" University of Timi\c{s}oara,
\\\small P-ta Victoriei Nr. 2,
 300006 Timi\c{s}oara,
Romania\\
\small {\it Email Address:$^{1,2}$ maria.jivulescu@mat.upt.ro}\\
\small {\it Email Address:$^1$ messina@fisica.unipa.it}}

\thispagestyle{empty}

\maketitle

\begin{center}SUMMARY\end{center}
\noindent The detailed construction of the general solution of a
second order non-homogenous linear operator-difference equation is
presented. The wide applicability of such an equation as well as
the usefulness of its  resolutive formula is shown by studying
some applications belonging to different mathematical contexts.

\medskip

\noindent {\bf Keywords:} difference  equation, companion matrix,
generating functions, noncommutativity.



\section{INTRODUCTION}
In this paper we report the explicit representation of the general
solution of the second order non-homogenous linear
operator-difference equation
\begin{equation}\label{NHE} Y_{p+2}=\mathcal{L}_0Y_p+\mathcal{L}_1Y_{p+1}+\phi_{p+1},\end{equation}
 where the unknown  $\{Y_p\}_{p\in\mathbb{N}}$ as well as the non-homogenous term $\{\phi_p\}_{p\in \mathbb{N}}$ are sequences from a vectorial space $V$,
and the coefficients $\mathcal{L}_0,\mathcal{L}_1$, are linear
noncommutative operators mapping $V$ on itself, independent from
the discrete  variable $p\in \mathbb{N}$. This equation
encompasses interesting problems arising in very different
scenarios. If, for instance,  the reference space $V$ is the
complex Euclidean space $\mathbb{C}^n$, that is $Y_p$ and
$\phi_{p+1}$ are $n$-dimensional vectors, $\mathcal{L}_0$ and
$\mathcal{L}_1$ $n\times n$ complex matrices, then eq. (\ref{NHE})
is the vectorial representation of a system of second-order linear
non-homogenous difference equations. As another example, let's
identify $V$ as the vectorial space of all linear operators
defined on a given Hilbert
    space.  Now, the operators $\mathcal{L}_0$ and $\mathcal{L}_1$ act upon
operators and
 for this reason are called superoperators.
  The master equations  appearing in the theory of open quantum systems  provide
  examples
 of equations belonging to this class\cite{Breuer}. It is of relevance to
 emphasize from the very beginning that the ingredients  $Y_p$, $\phi_p$,
 $\mathcal{L}_0$ and $\mathcal{L}_1$ of eq. \eqref{NHE} may be also interpreted as
 elements of an
 assigned algebra $V$.
 Let's consider, for
example, $V$ as the noncommutative algebra of all square matrices
of order $n$, that is $M_n[\mathbb{C}]$. Then,  eq. (\ref{NHE})
defines a second order non-homogenous linear matrix-difference
equation, where $Y_p, \phi_p, \mathcal{L}_0$ and $\mathcal{L}_1$
belong to $M_n(\mathbb{C})$. We wish further emphasis that if $V$
is the vectorial space of the smooth functions over an interval
$I$, that is $C^\infty(I)$, then eq. (\ref{NHE}) represents a wide
class of functional-difference
    equations\cite{Kolmanovskii}, including
    difference-differential equations or
    integro-difference equations\cite{Pinney, Bellman, Driver}.
These few examples
 motivate the interest toward  the search  of techniques for
 solving the operator eq.(\ref{NHE}), with $\mathcal{L}_0,\mathcal{L}_1$  noncommutative
 coefficients. \par In this paper  we cope with  such a problem and succeed in giving its
explicit solution  leaving unspecified the abstract "support
space" wherein eq. (\ref{NHE}) is formulated. This means that  we
do not choose from the very beginning the mathematical nature
 of its ingredients, rather we only require that all the symbols
 and operations appearing in eq. (\ref{NHE}) are meaningful.
 Accordingly,
 \lq\lq
 vectors\rq\rq \hspace{0.1cm} $Y_p$ may be added, this operation being commutative
 and,
 at the same time, may be acted upon by $\mathcal{L}_0$ or
 $\mathcal{L}_1$ ( hereafter called operators) transforming themselves into other \lq\lq vectors\rq\rq of $V$. The
 symbol $Y_0=0$   simply denotes, as usual, the neutral element of
 the underlying space. Finally we put $(\mathcal{L}_a\mathcal{L}_b)Y\equiv \mathcal{L}_a(\mathcal{L}_bY)\equiv
 \mathcal{L}_a\mathcal{L}_bY$ with $a$ or $b=0,1$ and define
 addition between operators through linearity.
\par The paper is organized as follows.  \\ The first section
presents the solution of an arbitrary Cauchy problem associated
with eq. \eqref{NHE}. Some interesting consequences of such a
result are derived in the subsequent section. The practical
usefulness of our resolutive formula is shown in the third section
where we solve some nontrivial functional-difference and
integral-difference equations. Some concluding remarks are
presented in the last section.
\section{EXPLICIT CONSTRUCTION OF THE RESOLUTIVE FORMULA OF EQ.\eqref{NHE}}
Let's begin by recalling that if
 $\{Y_p^{*}\}_{p\in\mathbb{N}}$ and   $\{Y_p\}_{p\in\mathbb{N}}$
are solutions  of
 eq.(\ref{NHE}), then $\{Y_p^{H}\}_{p\in\mathbb{N}}$ defined as $Y_p-Y_p^*\equiv Y_p^{H}$  is a solution of the associated homogenous equation
 \begin{equation}\label{HEQ} Y_{p+2}=\mathcal{L}_0Y_p+\mathcal{L}_1Y_{p+1}\end{equation}
Thus, as for the linear differential equations, and independently
from the noncommutative nature of $\mathcal{L}_0$ and
$\mathcal{L}_1$, solving eq. \eqref{NHE} amounts at being able to
construct the general integral of eq. \eqref{HEQ} and to find out
a particular solution of eq. \eqref{NHE}. To this end, we start
with the following theorem which extends
 a recently published result\cite{Jivulescu} concerning the exact
resolution of the following Cauchy
problem\begin{equation}\label{CP}\left\{\begin{array}{rl}
      Y_{p+2}=\mathcal{L}_0Y_p+\mathcal{L}_1Y_{p+1}, \\
       Y_0=0,\quad Y_1=B\end{array}\right..\end{equation}

\begin{theorem}
The  solution of the Cauchy problem
\begin{equation}\label{HE}\left\{\begin{array}{rl}
      Y_{p+2}=\mathcal{L}_0Y_p+\mathcal{L}_1Y_{p+1}\\
       Y_0=A,\quad Y_1=B\end{array}\right., \end{equation}
 can be
written as
\begin{equation}\label{solH}Y_p^{(H)}=\alpha_pA+\beta_pB,\end{equation}
 where  the operators  $\alpha_p$ and $\beta_p$ have the following
 form
\begin{equation}\label{alfa} \alpha_p=\left\{\begin{array}{ll}
\sum\limits_{t=0}^{[\frac{p-2}{2}]}\{\mathcal{L}_0^{(t)}\mathcal{L}_1^{(p-2-2t)}\}\mathcal{L}_0&\quad if \quad p\geq2 \\
 0&\quad if \quad  p=1\\
E&\quad if \quad p=0
\end{array}\right.,\end{equation}
\begin{equation}\label{beta}
\beta_p=\left\{\begin{array}{ll}\sum\limits_{t=0}^{[\frac{p-1}{2}]}\{\mathcal{L}_0^{(t)}\mathcal{L}_1^{(p-1-2t)}\}
& \quad if  \quad p\geq 2\\E&\quad if \quad  p=1\\ 0& \quad if
\quad p=0
\end{array}\right.,
\end{equation}
\end{theorem}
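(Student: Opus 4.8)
The plan is to exploit the linearity of eq.~\eqref{HE} together with the already-established solution of the special Cauchy problem~\eqref{CP}. First I would fix the meaning of the symbol $\{\mathcal{L}_0^{(t)}\mathcal{L}_1^{(s)}\}$: it stands for the sum of all $\binom{t+s}{t}$ distinct noncommutative words containing exactly $t$ factors $\mathcal{L}_0$ and $s$ factors $\mathcal{L}_1$. Assigning weight $2$ to $\mathcal{L}_0$ and weight $1$ to $\mathcal{L}_1$, the operator $\beta_p$ is then nothing but the sum of all words of total weight $p-1$, while $\alpha_p$ is the same sum of weight $p-2$ multiplied on the right by $\mathcal{L}_0$. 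Since eq.~\eqref{HE} determines $Y_{p+2}$ explicitly from $Y_p,Y_{p+1}$, the solution is unique, so it suffices to exhibit \emph{a} sequence matching the two initial data and the recurrence. By linearity I split $Y_p = Y_p^{(A)} + Y_p^{(B)}$, where $Y^{(A)}$ solves \eqref{HE} with data $(A,0)$ and $Y^{(B)}$ with data $(0,B)$; the cited result on \eqref{CP} gives at once $Y_p^{(B)} = \beta_p B$, with $\beta_p$ as in \eqref{beta}.

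The key step is to reduce the $(A,0)$ problem to the already solved one by a shift. Setting $Z_p := Y_{p+1}^{(A)}$, one checks that $Z_0 = Y_1^{(A)} = 0$, that $Z_1 = Y_2^{(A)} = \mathcal{L}_0 A$, and that $Z_p$ obeys the very same recurrence $Z_{p+2} = \mathcal{L}_0 Z_p + \mathcal{L}_1 Z_{p+1}$. Hence $\{Z_p\}$ is the solution of \eqref{CP} with $B$ replaced by $\mathcal{L}_0 A$, and the cited result yields $Z_p = \beta_p \mathcal{L}_0 A$, that is $Y_p^{(A)} = \beta_{p-1}\mathcal{L}_0 A$ for $p \geq 1$, together with $Y_0^{(A)} = A$. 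Reading off the coefficient of $A$ gives $\alpha_p = \beta_{p-1}\mathcal{L}_0$ for $p\ge 1$ and $\alpha_0 = E$; substituting the closed form of $\beta_{p-1}$ reproduces exactly \eqref{alfa}, including the degenerate values $\alpha_1 = \beta_0\mathcal{L}_0 = 0$ and $\alpha_0 = E$.

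Should a fully self-contained argument be preferred (not invoking \cite{Jivulescu}), I would instead verify \eqref{solH}--\eqref{beta} directly by induction on $p$. The base cases $p=0,1$ are immediate from the piecewise definitions, and for the inductive step it is enough to establish the operator recurrences $\alpha_{p+2} = \mathcal{L}_0\alpha_p + \mathcal{L}_1\alpha_{p+1}$ and $\beta_{p+2} = \mathcal{L}_0\beta_p + \mathcal{L}_1\beta_{p+1}$. The heart of the matter is the combinatorial identity for $\beta$: every word of weight $p+1$ begins either with $\mathcal{L}_0$, leaving a word of weight $p-1$, or with $\mathcal{L}_1$, leaving a word of weight $p$; summing this first-letter decomposition over all words gives precisely $\beta_{p+2} = \mathcal{L}_0\beta_p + \mathcal{L}_1\beta_{p+1}$, and the $\alpha$-recurrence then follows by right-multiplication by $\mathcal{L}_0$. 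I expect the only real friction to be bookkeeping the summation limits $[\frac{p-2}{2}]$ and $[\frac{p-1}{2}]$ across the even/odd parities of $p$, so that the index ranges on the two sides of each identity match; phrasing everything in terms of ``sum of all words of a given weight'' (equivalently, the entries of the $p$-th power of the companion matrix with first row $(0,\,E)$ and second row $(\mathcal{L}_0,\,\mathcal{L}_1)$, whose matrix generating function is $(E-\mathcal{L}_1 z-\mathcal{L}_0 z^2)^{-1}$) makes this bookkeeping essentially disappear.
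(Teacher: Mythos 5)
Your proof is correct, but it does not follow the paper's route, because the paper in fact gives no proof of this theorem at all: the authors state that the argument is ``practically coincident'' with the proof of the special Cauchy problem \eqref{CP} (the case $Y_0=0$) in the cited reference, i.e.\ they implicitly redo that combinatorial induction with general initial data. Your first argument instead derives the general statement as a formal corollary of the special case: superposition splits the problem into the data $(0,B)$, covered verbatim by the cited result, and the data $(A,0)$, which your index shift $Z_p=Y_{p+1}^{(A)}$ turns back into problem \eqref{CP} with initial vector $\mathcal{L}_0A$, yielding $\alpha_p=\beta_{p-1}\mathcal{L}_0$ for $p\geq 1$; this identity does reproduce \eqref{alfa}, including the degenerate values $\alpha_1=\beta_0\mathcal{L}_0=0$ and $\alpha_2=\beta_1\mathcal{L}_0=\mathcal{L}_0$ (reading the empty symbol $\{\mathcal{L}_0^{(0)}\mathcal{L}_1^{(0)}\}$ as $E$). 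This is tidier than re-running the induction, and it isolates the structural relation $\alpha_p=\beta_{p-1}\mathcal{L}_0$, which the paper never states explicitly even though it is visible in \eqref{alfa}. Your fallback induction is also sound: the first-letter decomposition of the words of weight $p+1$ gives exactly $\beta_{p+2}=\mathcal{L}_0\beta_p+\mathcal{L}_1\beta_{p+1}$ (the same operator identity the paper later invokes inside the proof of its Theorem 2), the $\alpha$ recurrence follows by right multiplication by $\mathcal{L}_0$, and uniqueness is immediate since the recurrence determines $Y_{p+2}$ from $Y_p$ and $Y_{p+1}$. In short, both of your routes are valid: the shift argument buys brevity given the cited result, while the induction buys a self-contained proof that makes the theorem logically independent of the reference the paper leans on.
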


We recall that the mathematical symbol $\{\mathcal{L}_0
^{(u)}\mathcal{L}_1^{(v)}\}$, in accordance with ref
\cite{Jivulescu}, denotes the sum of all possible distinct
permutations of $u$ factors $\mathcal{L}_0$ and $v$ factors
 $\mathcal{L}_1$, while $0$, $E:V\rightarrow V$ define the null
 and
  the identity operator in $V$, respectively.
We omit the proof of this theorem since it is practically
coincident with that given in ref \cite{Jivulescu}. Here instead
we demonstrate the following
\begin{theorem} Eq. (\ref{NHE}) admits the particular solution

\begin{equation}\label{SOLP}
Y_p^*=\left\{\begin{array}{ll}
\sum\limits_{r=1}^{p-1}\beta_{p-r}\phi_r, &\quad if \quad p\geq2 \\
 0,&\quad if \quad  p=0,1\\

\end{array}\right.
\end{equation}
\end{theorem}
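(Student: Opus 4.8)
The plan is to verify directly, by substitution, that the sequence $\{Y_p^*\}$ defined in \eqref{SOLP} satisfies \eqref{NHE}; the whole argument rests on a recurrence obeyed by the coefficients $\beta_p$, which I would extract from the preceding Theorem.

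First I would observe that $\{\beta_p\}$ is itself a solution of the homogenous equation \eqref{HEQ}. Specializing Theorem 1 to the Cauchy data $A=0$, $B=E$ yields $Y_p^{(H)}=\beta_p$, and hence
\[
\beta_{p+2}=\mathcal{L}_0\beta_p+\mathcal{L}_1\beta_{p+1},\qquad p\ge 0,
\]
together with the boundary values $\beta_0=0$ and $\beta_1=E$ read off from \eqref{beta}. A direct evaluation of \eqref{beta} at $p=2$ also records $\beta_2=\mathcal{L}_1$, a fact needed to match the lowest-order terms.

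Next I would insert \eqref{SOLP} into the right-hand side of \eqref{NHE} for $p\ge 2$. Using the linearity of $\mathcal{L}_0,\mathcal{L}_1$ and the convention $(\mathcal{L}_a\mathcal{L}_b)Y=\mathcal{L}_a(\mathcal{L}_bY)$, I may push the operators inside the sums to obtain $\mathcal{L}_0 Y_p^*=\sum_{r=1}^{p-1}(\mathcal{L}_0\beta_{p-r})\phi_r$ and $\mathcal{L}_1 Y_{p+1}^*=\sum_{r=1}^{p}(\mathcal{L}_1\beta_{p+1-r})\phi_r$. For the indices $r=1,\dots,p-1$ common to both sums, the recurrence above (applied at the shifted index $k=p-r$) gives $\mathcal{L}_0\beta_{p-r}+\mathcal{L}_1\beta_{p+1-r}=\beta_{p+2-r}$, so these terms collapse to $\sum_{r=1}^{p-1}\beta_{p+2-r}\phi_r$. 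The two contributions left over are the $r=p$ term $\mathcal{L}_1\beta_1\phi_p=\beta_2\phi_p$ (here using $\beta_1=E$ and $\beta_2=\mathcal{L}_1$) coming from $\mathcal{L}_1 Y_{p+1}^*$, and the explicit inhomogeneity $\phi_{p+1}=\beta_1\phi_{p+1}$. Appending these as the $r=p$ and $r=p+1$ terms completes the sum to $\sum_{r=1}^{p+1}\beta_{p+2-r}\phi_r$, which is precisely $Y_{p+2}^*$ as prescribed by \eqref{SOLP}.

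Finally I would check the base cases $p=0,1$ by hand, since \eqref{SOLP} has a separate branch there: from $Y_0^*=Y_1^*=0$ one finds $Y_2^*=\phi_1=\beta_1\phi_1$ and $Y_3^*=\mathcal{L}_1\phi_1+\phi_2=\beta_2\phi_1+\beta_1\phi_2$, both in agreement with \eqref{SOLP}. I expect the only delicate point to be the index bookkeeping required to align the two shifted sums and to peel off the boundary terms correctly; the noncommutativity causes no trouble, because $\mathcal{L}_0,\mathcal{L}_1$ always act from the left on the vectors $\phi_r$, so only linearity on the operator side is ever invoked.
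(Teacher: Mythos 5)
Your proposal is correct and follows essentially the same route as the paper's proof: both extract the operator recurrence $\mathcal{L}_0\beta_{k}+\mathcal{L}_1\beta_{k+1}=\beta_{k+2}$ from Theorem 1, substitute \eqref{SOLP} into the right-hand side of \eqref{NHE}, collapse the aligned sums via that recurrence, absorb the leftover terms $\beta_2\phi_p=\mathcal{L}_1\beta_1\phi_p$ and $\beta_1\phi_{p+1}=\phi_{p+1}$ into the completed sum $\sum_{r=1}^{p+1}\beta_{p+2-r}\phi_r=Y_{p+2}^*$, and verify the cases $p=0,1$ separately. The only cosmetic difference is that you derive the recurrence by formally putting $B=E$ in Theorem 1 (a slight abuse, since $E$ is an operator on $V$ rather than an element of $V$), whereas the paper applies Theorem 1 to the Cauchy problem \eqref{CP} with arbitrary vector $B$; this is harmless here because in your computation the recurrence is only ever applied to the vectors $\phi_r$, which is exactly the statement Theorem 1 delivers with $B=\phi_r$.
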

\begin{proof}
It is immediate to verify, by direct substitution, that  the
sequence given by eq. (\ref{SOLP}) satisfies eq. (\ref{NHE})
written for $p=0$ and $p=1$. To this end, it is enough to exploit
eqs. \eqref{beta} and \eqref{SOLP} getting
$Y_2^*=\beta_1\phi_1=\phi_1$ and
$Y_3^*=\beta_2\phi_1+\beta_1\phi_2=\mathcal{L}_1\phi_1+\phi_2$.
\\For a generic $p\geq 2$,  introducing $Y_p^*$ in the right
hand of eq. (\ref{NHE}) yields

\begin{eqnarray}\label{10}\mathcal{L}_0\sum\limits_{r=1}^{p-1}\beta_{p-r}\phi_r+\mathcal{L}_1\sum\limits_{r=1}^{p}\beta_{p+1-r}\phi_r+\phi_{p+1}\nonumber\\
=\label{OPT}\sum\limits_{r=1}^{p-1}(\mathcal{L}_0\beta_{p-r}+\mathcal{L}_1\beta_{p+1-r})\phi_r+\mathcal{L}_1\beta_1\phi_{p}+\phi_{p+1}
\end{eqnarray}
Applying theorem (1) to the Cauchy problem expressed by eq.
\eqref{CP}, we easily deduce that for $p\geq 2$  and $r=1,2,\dots,
p-1$ the following operator identity
\begin{equation}
\mathcal{L}_0\beta_{p-r}+\mathcal{L}_1\beta_{p+1-r}=\beta_{p+2-r},\quad
\end{equation}holds. Thus, the  expression
given by eq. \eqref{10} may be cast as follows

\begin{eqnarray}\label{11}
\sum\limits_{r=1}^{p-1}\beta_{p+2-r}\phi_r+\beta_{p+2-(p)}\phi_{p}+\beta_{p+2-(p+1)}\phi_{p+1}=\sum\limits_{r=1}^{p+1}\beta_{p+2-r}\phi_r
\end{eqnarray}
where we have exploited the identity
$\beta_2=\mathcal{L}_1\beta_1$ based on  eq. (\ref{beta}). Since
the right hand of eq. \eqref{11} coincides with $Y_{p+2}^*$ as
given by eq. (\ref{SOLP}), we may conclude that $\{Y_p^*\}_{p\in
\mathbb{N}}$, expressed by eq. \eqref{SOLP}, provides a particular
solution of eq. \eqref{NHE}.
\end{proof}
On the basis of theorem (1) and (2) we hence may state our main
result, that is
\begin{theorem}The solution of the Cauchy problem
\begin{equation}\label{CPPNH}\left\{\begin{array}{rl}
      Y_{p+2}=\mathcal{L}_0Y_p+\mathcal{L}_1Y_{p+1}+\phi_{p+1}\\
       Y_0=A,\quad Y_1=B\end{array}\right., \end{equation} is
\begin{equation}\label{solNH}Y_p=\alpha_pA+\beta_pB+\sum\limits_{r=1}^{p-1}\beta_{p-r}\phi_r\end{equation}
where $A$ and $B$ are generic admissible initial conditions and
$\alpha_p$ and $\beta_p$ are defined by eqs. \eqref{alfa} and
\eqref{beta}, respectively.
 \end{theorem}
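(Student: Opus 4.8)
The plan is to exploit the superposition principle for linear difference equations, which the excerpt established at the start of Section 2: if $Y_p$ solves the non-homogeneous equation \eqref{NHE} and $Y_p^*$ is any particular solution, then $Y_p-Y_p^*$ solves the homogeneous equation \eqref{HEQ}. Accordingly, I would present the candidate \eqref{solNH} as the sum of the homogeneous piece $\alpha_p A+\beta_p B$ furnished by Theorem 1 and the particular piece $\sum_{r=1}^{p-1}\beta_{p-r}\phi_r$ furnished by Theorem 2, and then argue that this sum both satisfies the recurrence in \eqref{CPPNH} and matches the prescribed Cauchy data $Y_0=A$, $Y_1=B$. No new computation is really needed; the entire content sits in Theorems 1 and 2.

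First I would verify that \eqref{solNH} obeys the recurrence. Because $\mathcal{L}_0$ and $\mathcal{L}_1$ act linearly, applying the right-hand side of \eqref{CPPNH} to \eqref{solNH} splits into two independent contributions. The part built from $\alpha_p A+\beta_p B$ reproduces $\alpha_{p+2}A+\beta_{p+2}B$, since by Theorem 1 the sequences $\{\alpha_p\}$ and $\{\beta_p\}$ each solve the homogeneous recurrence \eqref{HEQ}. The part built from the convolution sum, taken together with the forcing term $\phi_{p+1}$, reproduces $\sum_{r=1}^{p+1}\beta_{p+2-r}\phi_r$ exactly by Theorem 2, whose proof already performed precisely this manipulation. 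Adding the two contributions yields $Y_{p+2}=\alpha_{p+2}A+\beta_{p+2}B+\sum_{r=1}^{p+1}\beta_{p+2-r}\phi_r$, which is \eqref{solNH} evaluated at $p+2$.

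Next I would check the initial conditions by reading off the boundary values of $\alpha_p$ and $\beta_p$ from \eqref{alfa} and \eqref{beta}. At $p=0$ the convolution sum is empty, so $Y_0=\alpha_0 A+\beta_0 B=EA+0\cdot B=A$; at $p=1$ the sum is again empty, so $Y_1=\alpha_1 A+\beta_1 B=0\cdot A+EB=B$. Finally I would invoke uniqueness: the recurrence in \eqref{CPPNH} expresses $Y_{p+2}$ explicitly in terms of $Y_p$ and $Y_{p+1}$, so once $Y_0$ and $Y_1$ are fixed the whole sequence is determined. Having exhibited one sequence satisfying both the recurrence and the initial data, I conclude it is \emph{the} solution. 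I do not anticipate a genuine obstacle: the only care required is the routine bookkeeping of the superposition and the correct boundary evaluations $\alpha_0=E,\ \beta_0=0,\ \alpha_1=0,\ \beta_1=E$.
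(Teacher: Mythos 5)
Your proposal is correct and follows essentially the same route as the paper, which states this theorem directly ``on the basis of theorems (1) and (2)'': superposition of the homogeneous solution from Theorem 1 and the particular solution from Theorem 2, plus the boundary evaluations $\alpha_0=E$, $\beta_0=0$, $\alpha_1=0$, $\beta_1=E$ and uniqueness of the sequence generated by the recurrence. Your write-up merely makes explicit the routine verification the paper leaves implicit, and does so correctly.
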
 We emphasize that eq. \eqref{solNH} furnishes a recipe to solve
 explicitly,
 that is in terms of its ingredients $\mathcal{L}_0,
 \mathcal{L}_1$ and $\{\phi_{p+1}\}_{p\in\mathbb{N}}$, the general
 Cauchy
 problem expressed by
 eq. \eqref{CPPNH}.  In the subsequent sections we will highlight that our result is effectively exploitable, providing indeed
 a useful approach to solve
problems belonging to very different mathematical contexts. This
circumstance adds a further robust motivation to investigate eq.
\eqref{NHE} and its consequences.

We conclude this section looking for the structural form assumed
by eq. \eqref{solNH} solely relaxing  the noncommutativity between
the two operator coefficients $\mathcal{L}_0$ and $\mathcal{L}_1$.
To this end, it is useful to recall the definition of the
 Chebyshev polynomials of the second kind
$\mathcal{U}_p(x), x\in \mathbb{C}$\cite{SCH}
\begin{equation}\label{Cby}\mathcal{U}_p(x)=\sum\limits_{m=0}^{[p/2]}(-1)^m\frac{(p-m)!}{m!(p-2m)!}(2x)^{p-2m}\end{equation} Indeed, taking into consideration
 that the number of all the different terms appearing in the
operator symbol $\{\mathcal{L}_0 ^{(u)}\mathcal{L}_1^{(v)}\}$
coincides with the binomial coefficient $ \left(\begin{array}{c}
                                u+v\\ m
                                \end{array} \right)$, with $m=min(u,v)$ as well as  assuming the existence
 of the operator $(-\mathcal{L}_0)^{-\frac{1}{2}}$, then the operators $\alpha_p$ and $\beta_p$ for $p\geq
 2$ may be cast as follows
\begin{equation}\label{C11}\alpha_p=-(-\mathcal{L}_0)^{\frac{p}{2}}\mathcal{U}_{p-2}\left(\frac{1}{2}\mathcal{L}_1(-\mathcal{L}_0)^{-\frac{1}{2}}\right)\end{equation}
and
\begin{equation}\label{C22}\beta_p=(-\mathcal{L}_0)^{\frac{p-1}{2}}\mathcal{U}_{p-1}\left(\frac{1}{2}\mathcal{L}_1(-\mathcal{L}_0)^{-\frac{1}{2}}\right)\end{equation}
 where
$\mathcal{U}_p\left(\frac{1}{2}\mathcal{L}_1(-\mathcal{L}_0)^{-\frac{1}{2}}\right)$
means the operator value of the polynomial $\mathcal{U}_p$ defined
in accordance with eq.\eqref{Cby} for
$x=\left(\frac{1}{2}\mathcal{L}_1(-\mathcal{L}_0)^{-\frac{1}{2}}\right)$.
Thus, the solution of Cauchy problem \eqref{CPPNH} may be
rewritten, for $p\geq 2$, as
\begin{eqnarray}\label{POWC}Y_p=(-\mathcal{L}_0)^{\frac{p-1}{2}}\mathcal{U}_{p-1}\left(\frac{1}{2}\mathcal{L}_1(-\mathcal{L}_0)^{-\frac{1}{2}}\right)B-(-\mathcal{L}_0)^{\frac{p}{2}}\mathcal{U}_{p-2}\left(\frac{1}{2}\mathcal{L}_1(-\mathcal{L}_0)^{-\frac{1}{2}}\right)A+\nonumber\\+\sum\limits_{r=1}^{p-1}(-\mathcal{L}_0)^{\frac{p-r-1}{2}}\mathcal{U}_{p-r-1}\left(\frac{1}{2}\mathcal{L}_1(-\mathcal{L}_0)^{-\frac{1}{2}}\right)\phi_{r}\end{eqnarray}
where $Y_0=A$ and $Y_1=B$ are the prescribed initial conditions.

\section{SOME CONSEQUENCES OF THE RESOLUTIVE FORMULA}
The mathematical literature offers several ways of solving linear
second  difference equations such as the matrix method or the
generating function method. In the following we will heuristically
generalized these methods to the operator case. The novelty of our
method enables to deduce, by comparison with these approaches,
some interesting consequent identities. Indeed,  the second-order
operator difference equation \eqref{NHE} may be traced back to the
first-order vectorial representation
\begin{equation}\label{ME}\mathbf{Y}_{p+1}=C_1\mathbf{Y}_p+\Phi_{p+1}\end{equation}
where $\mathbf{Y}_p=\left(\begin{array}{c}
                                Y_{p}\\Y_{p+1}
                                \end{array} \right)$, $C_1=\left(\begin{array}{cc}0&E\\
                                \mathcal{L}_0&\mathcal{L}_1
                                \end{array} \right)$, $\Phi_{p+1}=\left(\begin{array}{c}
                                0\\\phi_{p+1}
                                \end{array} \right)$,  $\mathbf{Y}_0=\left(\begin{array}{c}
                                A\\B
                                \end{array} \right)$.

 Successive iterations easily  lead us to the formal solution
\begin{equation}\label{c2}\mathbf{Y}_p=C_1^{p}\mathbf{Y}_0+\sum\limits_{r=1}^{p}C_1^{p-r}\Phi_{r}\end{equation}
On this basis, the solution of eq.\eqref{NHE} may be written as
\cite{Gohberg}
\begin{equation}Y_p=P_1C_1^{p}\mathbf{Y}_0+P_1\sum\limits_{r=1}^{p}C_1^{p-r}R_1\Phi_{r}\end{equation}
where $P_1=(E \quad 0)$ and $R_1=\left(\begin{array}{c}
                                0\\E
                                \end{array} \right)$.
This solution is of practical use only if we are able to evaluate
the general integer power of the companion matrix $C_1$.
Exploiting our procedure of writing the solution of eq.
\eqref{NHE}, the vector $\mathbf{Y}_p$ may be expressed,
accordingly with eq. (\ref{solNH}), in terms of operator sequences
$\alpha_p$ and $\beta_p$ like
\begin{equation}\label{c1}\mathbf{Y}_p=\left(\begin{array}{c}
                                \alpha_{p}A+\beta_{p}B+\sum\limits_{r=1}^{p-1}\beta_{p-r}\phi_r\\\alpha_{p+1}A+\beta_{p+1}B+\sum\limits_{r=1}^{p}\beta_{p+1-r}\phi_r
                                \end{array} \right)=\left(\begin{array}{cc}
                               \alpha_{p}&\beta_{p}\\\alpha_{p+1}&\beta_{p+1}
                                \end{array} \right)\left(\begin{array}{c}
                                A\\B
                                \end{array} \right)+\left(\begin{array}{c}
                                \sum\limits_{r=1}^{p-1}\beta_{p-r}\phi_r\\\sum\limits_{r=1}^{p}\beta_{p+1-r}\phi_r
                                \end{array} \right)\end{equation}
Confining  ourselves to the homogenous version of eq. \eqref{NHE},
that is putting $\phi_{p+1}=0$  into eqs. \eqref{c2} and
\eqref{c1}, we get the formula for the $p$-th power of the
companion matrix $C_1$ as follows

\begin{equation}\label{PowC1}C_1^p=\left(\begin{array}{cc}
                               0&E\\ \mathcal{L}_0&\mathcal{L}_1
                                \end{array} \right)^p=\left(\begin{array}{cc}
                              \alpha_{p}&\beta_{p}  \\\alpha_{p+1}&\beta_{p+1}
                                \end{array} \right)\end{equation}

Another possible way of treating  eq.\eqref{NHE} is via the
generating functions method\cite{Goldberg, Hildebrand}. We recall
that, given the sequence $\{Y_p\}_{p\in \mathbb{N}}$,
 the associated generating function  $Y(s), s\in \mathbb{C}$ is defined as
\begin{equation}\label{Gen}\mathcal{G}Y_p\equiv Y(s)\equiv\sum\limits_{p=0}^{\infty}Y_ps^p\end{equation}
under the assumption that the series converges when $|s|\leq \xi$,
for some positive number $\xi$. The advantage of this method
consists in the  systematical possibility of transforming a
difference equation in an algebraic one in the unknown $Y(s)$. In
order to apply such approach to the operator-difference equation
given by \eqref{NHE}, we stipulate that
$\mathcal{G}[\mathcal{L}_iY_p]$, $\mathcal{L}_i(\mathcal{G}Y_p)$
are both defined and that
$\mathcal{G}[\mathcal{L}_iY_p]=\mathcal{L}_i(\mathcal{G}Y_p),
i=0,1$. Accordingly, heuristically, we transform both sides of eq.
\eqref{NHE} getting
\begin{eqnarray}\nonumber \frac{Y(s)-A-Bs}{s^2}=\mathcal{L}_1\frac{Y(s)-A}{s}+\mathcal{L}_0Y(s)+\Phi(s)\end{eqnarray}
Thus, assuming the existence of
$(E-\mathcal{L}_1s-\mathcal{L}_0s^2)^{-1}$ within the convergence
disk, we have
\begin{equation}Y(s)=(E-\mathcal{L}_1s-\mathcal{L}_0s^2)^{-1}A+(E-\mathcal{L}_1s-\mathcal{L}_0s^2)^{-1}(B-\mathcal{L}_1A)s+(E-\mathcal{L}_1s-\mathcal{L}_0s^2)^{-1}\Phi(s)s^2\end{equation}
or equivalently
\begin{eqnarray}Y(s)=(E-\mathcal{L}_1s-\mathcal{L}_0s^2)^{-1}[E-\mathcal{L}_1 s]A+(E-\mathcal{L}_1s-\mathcal{L}_0s^2)^{-1}Bs+(E-\mathcal{L}_1s-\mathcal{L}_0s^2)^{-1}\Phi(s) s^2\end{eqnarray}
On the other hand, accordingly with eqs. \eqref{Gen} and
\eqref{solNH} it holds that
\begin{equation}Y(s)=\sum\limits_{p=0}^{\infty}(\alpha_pA+\beta_pB+\sum\limits_{r=1}^{p-1}\beta_{p-r}\phi_r)s^p\end{equation}
Thus, one notes that imposing  $\phi_{p+1}\equiv 0$ and $B=0$,
respectively $A=0$, we heuristically find
 the generating function of the operator
sequences $\alpha_p$  and $\beta_p$ in the closed form as
\begin{equation}\label{GenA}\mathcal{G}\alpha_p\equiv\sum\limits_{p=0}^{\infty}\alpha_ps^p:=(E-\mathcal{L}_1s-\mathcal{L}_0s^2)^{-1}[E-\mathcal{L}_1s]\end{equation}
respectively
\begin{equation}\label{GenB}\mathcal{G}\beta_p\equiv\sum\limits_{p=0}^{\infty}\beta_ps^p:=(E-\mathcal{L}_1s-\mathcal{L}_0s^2)^{-1}s\end{equation}
The particular case  $\mathcal{L}_0=-E$ reproduces  the generating
functions of the Chebyshev polynomials of second kind. Extracting
indeed for the sake of convenience the first two terms of the
series, that is writing $\nonumber
\sum\limits_{p=0}^{\infty}\alpha_ps^p=\alpha_0+\alpha_1s+\sum\limits_{p=2}^{\infty}\alpha_ps^p$
with the help of eq. \eqref{C11} and $\alpha_0=E, \alpha_1=0$ we
 get
\begin{eqnarray}\label{last}(E-\mathcal{L}_1s+s^2)^{-1}[E-\mathcal{L}_1s]=E-\sum\limits_{p=2}^{\infty}\mathcal{U}_{p-2}[\frac{\mathcal{L}_1}{2}]s^p
\end{eqnarray}
The eq. \eqref{last} easily determines the generating function of
the sequence $\{U_p[\mathcal{L}_1/2]\}_{p\in \mathbb{N}}$ in the
form
\begin{eqnarray}\nonumber\sum\limits_{p=2}^{\infty}\mathcal{U}_{p-2}[\frac{\mathcal{L}_1}{2}]s^p=(E-\mathcal{L}_1s+s^2)^{-1}[E-\mathcal{L}_1s+s^2-E+\mathcal{L}_1s]\\
\Leftrightarrow
\sum\limits_{p=0}^{\infty}\mathcal{U}_{p}[\frac{\mathcal{L}_1}{2}]s^p=(E-\mathcal{L}_1s+s^2)^{-1}E\end{eqnarray}
The novel results obtained in this paper exploiting our resolutive
formula ( eqs. \eqref{PowC1}, \eqref{GenA}, \eqref{GenB}) clearly
evidence that our recipe to manage eq.\eqref{NHE} successfully
integrate with other resolutive methods. Thus, we may claim that
our resolutive formula do not possess a formal character only,
since it helps to provide new interesting identities.
\section{APPLICATIONS OF OUR RESOLUTIVE FORMULA}
\vspace{0.5cm} \textit{An example of matrix-difference equation coped with our formula}\vspace{0.5cm}\\
Let us consider the second order matrix-difference eqution
\begin{eqnarray} Y_{p+2}=M_0Y_p+M_1Y_{p+1}+\Phi_{p+1}\end{eqnarray}
where $M_0,M_1\in \mathcal{M}_n(\mathbb{C})$ are noncommutative
nilpotent matrices of index $2$, that is $M_i^2=0, \quad i=0,1$.
Prescribing the initial conditions $Y_0=A, Y_1=B$, then the
solution of this equation is given by the eq. (\ref{solNH}).  The
analysis of the matrix term $\{M_0^{(u)}M_1^{(v)}\}$ which appears
in the composition of the matrix-operator $\alpha_p$ and $\beta_p$
brings to light interesting peculiarities due to the specific
nature of the coefficients $M_0$ and $M_1$. By definition, the
term $\{M_0^{(u)}M_1^{(v)}\}$ represents the sum of all possible
terms of $u$ factor $M_0$ and $v$ factors $M_1$. Thus, it is quite
simple to deduce that now the matrix-term of the form
$M_0^{\nu_1}M_1^{\nu_2}M_0^{\nu_3}M_1^{\nu_4}\dots$ is equal with
zero, if $\nu_i>1, (\forall) i$. Hence, we deduce that the
operator $\{M_0^{(u)}M_1^{(v)}\}$ survives solely when $u=v$ or
$v=u\pm 1$. Indeed, when $u=v, u\geq2$, then in the sum
$\{M_0^{(u)}M_1^{(u)}\}$  survive only the terms
$\underbrace{[M_0M_1][M_0M_1]\dots [M_0M_1]}_{u-times}$ and
$\underbrace{[M_1M_0][M_1M_0]\dots [M_1M_0]}_{u-times}$. Further,
the only nonvanishing   matrix-terms $\{M_0^{(u)}M_1^{(u+1)}\}$
are $M_1\underbrace{[M_0M_1][M_0M_1]\dots[M_0M_1]}_{u-times}$, as
well as from $\{M_0^{(u)}M_1^{(u-1)}\}$ the terms
$\underbrace{[M_0M_1][M_0M_1]\dots[M_0M_1]}_{(u-1)-times}M_0$,
respectively. Exploiting the above results for the matrix term
$\{M_0^{(t)}M_1^{(p-1-2t)}\}$ we establish that\vspace{0.5cm}
\begin{eqnarray}\beta_p=\left\{\begin{array}{rl}
      \{M_0^{(\frac{p-1}{3})}M_1^{(\frac{p-1}{3})}\}=\underbrace{[M_0M_1]\dots[M_0M_1]}_{k-times}+\underbrace{[M_1M_0]\dots[M_1M_0]}_{k-times}, \quad p=3k+1
      \\\vspace{0.5cm}
      \{M_0^{(\frac{p-2}{3})}M_1^{(\frac{p+1}{3})}\}=M_1\underbrace{[M_0M_1]\dots
      [M_0M_1]}_{k-times},\quad p=3k+2
      \\\vspace{0.5cm}
      \{M_0^{(\frac{p}{3})}M_1^{(\frac{p-3}{3})}\}=\underbrace{[M_0M_1]\dots
      [M_0M_1]}_{(k-1)-times}M_0,\quad p=3k,
       \end{array}\right.\end{eqnarray}
where $ k=1,2,\dots$. Similarly, we have that
\begin{eqnarray}\alpha_p=\left\{\begin{array}{rl}\vspace{0.2cm}
     \{M_0^{(\frac{p-2}{3})}M_1^{(\frac{p-2}{3})}\}M_0=\underbrace{[M_0M_1]\dots[M_0M_1]}_{k-times}M_0,\quad p=3k+2 \\\vspace{0.2cm}
       \{M_0^{(\frac{p-3}{3})}M_1^{(\frac{p}{3})}\}M_0=M_1\underbrace{[M_0M_1]\dots[M_0M_1]}_{(k)-times}M_0,\quad
       p=3k+3\\\vspace{0.2cm}
       \{M_0^{(\frac{p-1}{3})}M_1^{(\frac{p-4}{3})}\}M_0=(\underbrace{[M_0M_1]\dots [M_0M_1]}_{(k-1)-times}M_0)M_0=0,\quad p=3k+1\end{array}\right.\end{eqnarray}
Hence, we may write the solution into a closed form
\begin{eqnarray}\nonumber
Y_p=\left[\delta_{\frac{p-2}{3},\left[\frac{p-2}{3}\right]}(M_0M_1)^{[\frac{p-2}{3}]}M_0+\delta_{\frac{p-3}{3},\left[\frac{p-3}{3}\right]}M_1(M_0M_1)^{[\frac{p-3}{3}]}M_0\right]A+\end{eqnarray}\begin{eqnarray}
[\delta_{\frac{p-1}{3},\left[\frac{p-1}{3}\right]}(M_0M_1)^{[\frac{p-1}{3}]}+\delta_{\frac{p-1}{3},\left[\frac{p-1}{3}\right]}(M_1M_0)^
{[\frac{p-1}{3}]}+\delta_{\frac{p-2}{3},\left[\frac{p-2}{3}\right]}M_1(M_0M_1)^{[\frac{p-2}{3}]}+\delta_{\frac{p-3}{3},\left[\frac{p-3}{3}\right]}(M_0M_1)
^{[\frac{p-3}{3}]}M_0]B\nonumber\end{eqnarray}
\begin{eqnarray}
+\sum\limits_{r=1}^{p-1}[\delta_{\frac{p-r-1}{3},\left[\frac{p-r-1}{3}\right]}\left((M_0M_1)^{[\frac{p-r-1}{3}]}+(M_1M_0)^{[\frac{p-r-1}{3}]}\right)+
\nonumber
\\\delta_{\frac{p-r-2}{3},\left[\frac{p-r-2}{3}\right]}M_1(M_0M_1)^{[\frac{p-r-2}{3}]}+\delta_{\frac{p-r-3}{3},\left[\frac{p-r-3}{3}\right]}(M_0M_1)^
{[\frac{p-r-3}{3}]}M_0]\Phi_r
\end{eqnarray}

\vspace{0.5cm} \textit{An example of functional-difference
equation coped with our method}\vspace{0.5cm} \\ The three-term
recurrence relation
\begin{equation}\label{ee}f_{p+2}(t)=-f_p(t-\tau_0)+f_{p+1}(t+\tau_1)\end{equation}
with the initial conditions $A=f_0(t)$ and $B=f_1(t)$ is an
example of   functional difference equation, traceable back to
 eq. \eqref{NHE}. It is indeed well-known that if $f(t)$ is a
 function of class $C^{\infty}$, then the translation of its
 independent variable from $t$ to $t+\tau$ can be represented as
 the effect on the same function of the operator $exp[\tau \frac{d\cdot}{dt}]=\sum\limits_{k=0}^{\infty}\frac{1}{k!}[\tau
 \frac{d\cdot}{dt}]^k$. This operator appears in a natural way when one
 studies problems characterized by translational invariance in a
 physical context\cite{Sakurai}.
Thus, by putting
$\mathcal{L}_{i}=(-1)^{i+1}exp[(-1)^{i+1}\tau_i\frac{d.}{dt}],i=0,1$
the commutativity property of the two operator coefficients
$\mathcal{L}_0$ and $\mathcal{L}_1$ allows us to write down the
solution of eq. \eqref{ee}  as follows
\begin{eqnarray}\nonumber
f_p(t)=\exp[-(\frac{p-1}{2})\tau_0\frac{d}{dt}]\mathcal{U}_{p-1}\left[\frac{1}{2}\exp[(\tau_1+\frac{\tau_0}{2})\frac{d}{dt}]\right]f_1(t)-\\
\exp[-(\frac{p}{2})\tau_0\frac{d}{dt}]\mathcal{U}_{p-2}\left[\frac{1}{2}\exp[(\tau_1+\frac{\tau_0}{2})\frac{d}{dt}]\right]f_0(t)\end{eqnarray}
Exploiting eq. \eqref{Cby} we may write down that
\begin{eqnarray}f_p(t)=\sum\limits_{k=0}^{[p-1/2]}(-1)^k \left(\begin{array}{c}
                                p-1-k\\k
                                \end{array} \right)f_1\left(t+(p-1-2k)\tau_1-k\tau_0\right)-\\
\sum\limits_{k=0}^{[p-2/2]}(-1)^k \left(\begin{array}{c}
                                p-2-k\\k
                                \end{array} \right)f_0\left(t+(p-2-2k)\tau_1-(k+1)\tau_0\right)\end{eqnarray}
Imposing, for example, the following initial conditions
$f_0(t)=e^{-t}$ and $f_1(t)=e^{t}$ we get
\begin{eqnarray}f_p(t)=
\exp[-(\frac{p-1}{2})\tau_0]\mathcal{U}_{p-1}\left[\frac{1}{2}\exp[(\tau_1+\frac{\tau_0}{2})]\right]e^t-\\
\exp[(
\frac{p}{2})\tau_0]\mathcal{U}_{p-2}\left[\frac{1}{2}\exp[-(\tau_1+\frac{\tau_0}{2})]\right]e^{-t}\end{eqnarray}
\vspace{0.5cm}\\
\textit{An example of integro-difference equation coped with our
method}\vspace{0.5cm}
\\Consider the difference-differential equation
\begin{eqnarray}f_{p+2}'(t)=\beta f_{p+1}(t)+\alpha f_{p}(t),\quad \alpha, \beta\in\mathbb{R}, \quad p=0,1,\dots\end{eqnarray}
where $f_p(t)$ is a $C^{\infty}(I)$ function with $f_0(t)$,
$f_1(t)$ and $\{f_p(0),\quad p=0,1,\dots\}$  prescribed
functions.\\ The above equation may be rewritten in the equivalent
form
\begin{eqnarray}\label{41}f_{p+2}(t)=\mathcal{L}_1f_{p+1}(t)+\mathcal{L}_0f_p(t)+f_{p+2}(0)\end{eqnarray}
where $\mathcal{L}_0=\alpha \mathcal{L}$, $\mathcal{L}_1=\beta
\mathcal{L}$ and $\mathcal{L}(\cdot)=\int\limits_0^t\cdot d\tau$.
Eq. \eqref{41} is a particular case of eq.\eqref{NHE}.
\\ The
explicit solution of this equation requires the knowledge of the
operator terms $\alpha_p$ and $\beta_p$. One remarks that
$\beta_p$ is the sum of $\left[\frac{p-1}{2}\right]+1$ operator
terms of the form $\{\mathcal{L}_0^t\mathcal{L}_1^{p-1-2t}\}$.
Because $\mathcal{L}_0=\alpha \mathcal{L}$ and
$\mathcal{L}_1=\beta \mathcal{L}$ then, for a finite $p$, holds
\begin{eqnarray}\{\mathcal{L}_0^k\mathcal{L}_1^{p-1-2k}\}=\alpha^k\beta^{p-1-2k} \left(\begin{array}{c}
                                p-1-k\\ k
                                \end{array} \right)\mathcal{L}^{p-1-k}\end{eqnarray}
Therefore, by direct substitution into eq. \eqref{beta} it follows
that
\begin{eqnarray}\beta_p=\sum\limits_{k=0}^{\left[\frac{p-1}{2}\right]}\alpha^k\beta^{p-1-2k} \left(\begin{array}{c}
                                p-1-k\\ k
                                \end{array} \right)\mathcal{L}^{p-1-k}\end{eqnarray}
Similarly, we have that
\begin{eqnarray}\alpha_p=\sum\limits_{k=0}^{\left[\frac{p-2}{2}\right]}\alpha^{k+1}\beta^{p-2-2k} \left(\begin{array}{c}
                                p-2-k\\ k
                                \end{array} \right)\mathcal{L}^{p-1-k}\end{eqnarray}
The solution of the corresponding homogenous equation in
accordance with the prescribed initial conditions is then
\begin{eqnarray}\nonumber f_p^{(H)}(t)=\sum\limits_{k=0}^{\left[\frac{p-2}{2}\right]}\alpha^{k+1}\beta^{p-2-2k} \left(\begin{array}{c}
                                p-2-k\\ k
                                \end{array} \right)\mathcal{L}^{p-1-k}\left(f_0(t)\right)+\\\sum\limits_{k=0}^{\left[\frac{p-1}{2}\right]}\alpha^k\beta^{p-1-2k} \left(\begin{array}{c}
                                p-1-k\\ k
                                \end{array} \right)\mathcal{L}^{p-1-k}\left(f_1(t)\right)\end{eqnarray}
Exploiting our central theorem (2), we may claim that
                                \begin{eqnarray}f_p^*=\sum\limits_{m=1}^{p-1}\beta_{p-m}f_{m+1}(0)=
\sum\limits_{m=1}^{p-2}\beta_{p-m}f_{m+1}(0)+f_p(0)\end{eqnarray}
is the particular solution of the nonhomogenous equation for which
$f_0=f_1=0, (\forall) t$. Equivalently, we have that
 \begin{eqnarray}f_p^*= \sum\limits_{m=1}^{p-2}\sum\limits_{k=0}^{\left[\frac{p-m-1}{2}\right]}\alpha^k\beta^{p-m-1-2k} \left(\begin{array}{c}
                                p-m-1-k\\ k
                                \end{array} \right)\mathcal{L}^{p-m-1-k}\left(f_{m+1}(0)\right)+f_p(0) \end{eqnarray}
But, as shown in the Appendix, we may prove that
\begin{eqnarray}\mathcal{L}^n(f(t))=\int\limits_0^tdt_n\int\limits_0^{t_n}dt_{n-1}\int\limits_0^{t_{n-1}}dt_{n-2}\dots\int\limits_0^{t_3}dt_2\int\limits_0^{t_2}f(t_1)dt_1=\frac{1}{(n-1)!}\int\limits_0^t(t-\tau)^{n-1}f(\tau)d\tau,\end{eqnarray}
so that we may write down  that
\begin{eqnarray}f_p^{(H)}(t)=\sum\limits_{k=0}^{\left[\frac{p-2}{2}\right]}\alpha^{k+1}\beta^{p-2-2k} \left(\begin{array}{c}
                                p-2-k\\ k
                                \end{array} \right)\frac{1}{(p-2-k)!}\int\limits_0^t(t-\tau)^{p-2-k}f_0(\tau)d\tau+\nonumber\\\sum\limits_{k=0}^{\left[\frac{p-1}{2}\right]}\alpha^k\beta^{p-1-2k} \left(\begin{array}{c}
                                p-1-k\\ k
                                \end{array} \right)\frac{1}{(p-2-k)!}\int\limits_0^t(t-\tau)^{p-2-r}f_1(\tau)d\tau
                                \end{eqnarray}and

                                \begin{eqnarray}\nonumber f_p^*=\sum\limits_{m=1}^{p-2}\sum\limits_{k=0}^{\left[\frac{p-m-1}{2}\right]}\alpha^k\beta^{p-m-1-2k} \left(\begin{array}{c}
                                p-m-1-k\\ k
                                \end{array} \right)\frac{1}{(p-m-2-k)!}\int\limits_0^t(t-\tau)^{p-m-2-k}f_{m+1}(0)d\tau+f_p(0)\end{eqnarray}
Hence, the general solution of the proposed integral-difference
equation is
\begin{eqnarray}f_p(t)=\sum\limits_{k=0}^{\left[\frac{p-2}{2}\right]}\alpha^{k+1}\beta^{p-2-2k} \left(\begin{array}{c}
                                p-2-k\\ k
                                \end{array} \right)\frac{1}{(p-2-k)!}\int\limits_0^t(t-\tau)^{p-2-k}f_0(\tau)d\tau+\nonumber\\\sum\limits_{k=0}^{\left[\frac{p-1}{2}\right]}\alpha^k\beta^{p-1-2k} \left(\begin{array}{c}
                                p-1-k\\ k
                                \end{array}
                                \right)\frac{1}{(p-2-k)!}\int\limits_0^t(t-\tau)^{p-2-k}f_1(\tau)d\tau+\\ \sum\limits_{m=1}^{p-2}\sum\limits_{k=0}^{\left[\frac{p-m-1}{2}\right]}\alpha^k\beta^{p-m-1-2k} \left(\begin{array}{c}
                                p-m-1-k\\ k
                                \end{array}
                                \right)\frac{t^{p-m-1-k}}{(p-m-2-k)!(p-m-1-k)!}f_{m+1}(0)+f_p(0)
                                \end{eqnarray}

\section{CONCLUSIVE REMARKS}

The novel and  mean theoretical result of this paper is expressed
by theorem (2) with which we  demonstrate that eq. \eqref{SOLP}
provides a particular solution of eq. \eqref{NHE}. This result
together with theorem (1)  completes the resolution of this
equation
 enabling us to write down  formula \eqref{solNH}
for its general solution. The operator character of eq.
\eqref{NHE} and, as a consequence, the presence of generally
noncommuting coefficients is the key to understand why such an
equation may represent the canonical form of equations seemingly
not related each other. The consequences of eq. \eqref{solNH} and
the applications  reported in this paper, besides being
interesting in their own, demonstrate indeed both the wide
applicability of eq. \eqref{NHE} as well as the practical
usefulness of its resolutive formula.

\appendix
\section{APPENDIX}
For the sake of completeness we here report a proof of the
well-known follwing identity
\begin{eqnarray}\label{formula}\int\limits_0^tdt_n\int\limits_0^{t_n}dt_{n-1}\int\limits_0^{t_{n-1}}dt_{n-2}\dots\int\limits_0^{t_3}dt_2\int\limits_0^{t_2}f(t_1)dt_1=\frac{1}{(n-1)!}\int\limits_0^t(t-\tau)^{n-1}f(\tau)d\tau,\end{eqnarray}
where $f(t)$ is a $C^{\infty}$-function. The mathematical
induction procedure will be exploited. \\ For $n=1$ the above
formula becomes the identity
\begin{eqnarray}\int\limits_0^tf(t_1)dt_1=\int\limits_0^tf(\tau)d\tau\end{eqnarray}
Let's suppose that the formula \ref{formula} holds for any $r\leq
n$ and we  prove it validity for $n+1$. To proceed, it is
convenient to define
\begin{eqnarray}F_r(t)=\int\limits_0^tdt_r\int\limits_0^{t_r}dt_{r-1}\int\limits_0^{t_{r-1}}dt_{r-2}\dots\int\limits_0^{t_3}dt_2\int\limits_0^{t_2}f(t_1)dt_1\end{eqnarray}
writing down eq.\eqref{formula} as follows
$$F_n(t)=\frac{1}{(n-1)!}\int\limits_0^t(t-\tau)^{n-1}f(\tau)d\tau$$
It is obviously  that by definition
\begin{eqnarray}F_{r+1}(t)=\int\limits_0^tF_r(t_{r+1})dt_{r+1}, \quad r\geq 1\end{eqnarray}
By the induction hypothesis we have that
\begin{eqnarray}F_{n+1}'=F_n(t)=\frac{1}{(n-1)!}\int\limits_0^t(t-\tau)^{n-1}f(\tau)d\tau\end{eqnarray}
Our problem becomes the resolution of the following Cauchy problem
\begin{eqnarray}\label{A}
\left\{\begin{array}{rl}
      F_{n+1}'(t)=\frac{1}{(n-1)!}\int\limits_0^t(t-\tau)^{n-1}f(\tau)d\tau\\
       F_{n+1}(0)=0\end{array}\right.
\end{eqnarray}
From the well-known Leibnitz identity
\begin{eqnarray}\frac{d}{dx}\int\limits_{a(x)}^{b(x)}f(x,y)dy=\int\limits_{a(x)}^{b(x)}\frac{\partial f(x,y)}{\partial x}dy+b'(x)f(x,b(x))-a'(x)f(x,a(x))\end{eqnarray}
we have that
\begin{eqnarray}\frac{1}{(n-1)!}\int\limits_0^t(t-\tau)^{n-1}f(\tau)d\tau=\frac{1}{n(n-1)!}\int\limits_0^t\frac{\partial \left[(t-\tau)^{n}f(\tau)\right]}{\partial t}d\tau =\frac{d}{dt}\left[\frac{1}{n!}\int\limits_0^t(t-\tau)^nf(\tau)d\tau\right]\end{eqnarray}
Hence, the differential equation for $F_{n+1}(t)$ may be rewritten
as
\begin{eqnarray}F_{n+1}'(t)=\frac{d}{dt}\left[\frac{1}{n!}\int\limits_0^t(t-\tau)^nf(\tau)d\tau\right]\end{eqnarray}
such that the Cauchy problem \eqref{A} has the solution
\begin{eqnarray}F_{n+1}(t)=\frac{1}{n!}\int\limits_0^t(t-\tau)^nf(\tau)d\tau\end{eqnarray}


\begin{thebibliography}{111}

\bibitem{Kolmanovskii}  Kolmanovskii V., Myshkis A. \emph{Applied Theory of Functional Differential equations}, Kluwer Academic Publisher,  1992.

\bibitem{Pinney} Pinney E. \emph{Ordinary Difference-Differential Equations} , University of California
Press, Berkely and Los Angeles, 1958.
\bibitem{Bellman} Bellman R., Cooke Kenneth \emph{Differential-Difference Equations} ,
Academic Press, New York and London, 1963.
\bibitem{Driver} Driver R. D., Cooke Kenneth \emph{Ordinary and Delay Differential Equations} ,
Springer-Verlag, New York Heidelberg Berlin 1977.
\bibitem{Breuer}  Breuer H.P., Petruccione F.  \emph{The Theory of Open Quantum
Systems} Oxford University Press Inc., New York, 2002.
\bibitem{Bellac}   Le Bellac M. \emph{Quantum Physics} Cambridge University
Press, Cambridge, 2006.
\bibitem{Jivulescu}  Jivulescu M.A., Messina A., Napoli A.,  Petruccione F. Exact treatment of linear difference equations with
noncommutative coefficients \textit{Mathematical Methods in
Applied Science} 2007; \textbf{30}: 2147-2153.
\bibitem{SCH} Murray  Spiegel  \emph{Schaum's Mathematical Handbook of Formulas and
Tables} McGraw-Hill, 1998.
\bibitem{Hildebrand} Hildebrand \emph{Finite-Difference Equations
and Simulations} Prentice-Hall, Inc, Englewood Cliffs, N J 1968.
\bibitem{Goldberg} Goldberg S.  \emph{Difference Equations}, John Wiley and Sons, Inc, New-York,  1958.
\bibitem{Gohberg} Gohberg I., Lancaster P., Rodman L.  \emph{Matrix polynomials}, Academic Press, Inc, New-York,  1982.
\bibitem{Sakurai} Sakurai J. J.  \emph{Advance quantum mechanics}, Addison Wesley,  1967.
\end{thebibliography}
\end{document}